\newtheorem{theorem}{Theorem}[section]
\newcommand{\qed}{\hfill$\blacksquare$}
\newenvironment{proof}{\noindent\textbf{Proof}\hspace*{1em}}{\qed\\}
\def\defeq{\triangleq} 
\crefname{assumption}{Assumption}{Assumptions}
\crefname{equation}{Eq.}{Eqs.}
\crefname{corollary}{Cor.}{Cors.}
\crefname{lemma}{Lem.}{Lems.}
\crefname{theorem}{Thm.}{Thms.}
\crefname{proposition}{Prop.}{Props.}
\crefname{assumption}{Assump.}{Assumps.}
\crefname{section}{Sec.}{Secs.}
\crefname{appendix}{App.}{Apps.}
\let\old@ps@headings\ps@headings
\let\old@ps@IEEEtitlepagestyle\ps@IEEEtitlepagestyle
\def\psccfooter#1{%
    \def\ps@headings{%
        \old@ps@headings%
        \def\@oddfoot{\strut\hfill#1\hfill\strut}%
        \def\@evenfoot{\strut\hfill#1\hfill\strut}%
    }%
    \def\ps@IEEEtitlepagestyle{%
        \old@ps@IEEEtitlepagestyle%
        \def\@oddfoot{\strut\hfill#1\hfill\strut}%
        \def\@evenfoot{\strut\hfill#1\hfill\strut}%
    }%
    \ps@headings%
}
\begin{document}
%
\title{Estimating Technical Loss without Power Flows:\\ A Practical, Data-Driven Approach for Loss Estimation in Distribution Grids}

\author{
\IEEEauthorblockN{Mohini Bariya$^{\dagger}$, Genevieve Flaspohler$^{*\dagger}$}
\IEEEauthorblockA{$^*$\textit{n}Line, Inc., $^{\dagger}$Rhiza Research\\
Berkeley, USA\\
\{mohini, genevieve\}@rhizaresearch.org}
}


\maketitle

\begin{abstract}
Electric grids in low- and middle-income countries (LMICs) across the world face an acute challenge. To support global decarbonisation efforts and raise millions from energy poverty, these grids must shoulder substantial load growth while integrating distributed renewable generation. However, decades of rapid and poorly funded infrastructure expansions have led to national grids in many LMICs that are strained and weak, composed of aging, faulty, and undersized infrastructure. A cause and symptom of this weakness is excessive technical loss within ithe grid infrastructure during energy delivery, particularly at the distribution level; network losses are regularly estimated to be well over 20\%, compared to a baseline of 5\% in higher-income nations. Addressing technical loss through targeted interventions is essential for bolstering grids’ physical and economic strength. Unfortunately, current approaches for estimating and localizing technical loss require expensive, extensive power flow sensing, which is essentially absent in LMIC distribution systems. We present a novel approach to technical loss estimation without power flows, which leverages more readily available voltage magnitude measurements at sparse locations in the grid. This estimator puts loss estimation and localization within reach for LMIC grids globally, and provides a critical tool for the effective design, implementation, and evaluation of loss-reduction interventions.


\end{abstract}

\begin{IEEEkeywords}
technical loss, distribution, monitoring, sensors
\end{IEEEkeywords}

\thanksto{\noindent This is a preprint of work supported by nLine, Inc.}

\section{Introduction}
Across low- and middle-income countries (LMICs), electric grids are central to planned economic growth and energy transition pathways. They must support the significantly larger electricity flows necessary to power economic activity and meet heightened customer demand, while handling the greater spatial and temporal variability introduced by renewable sources and end-use electrification. Recent analyses project 30-60\% electricity demand growth in LMICs \cite{iea2022world}. Yet, grids in LMICs are \textit{already} strained and weak, composed of aging, faulty and undersized equipment that cannot handle the current level and nature of demand and suffering from acute challenges in reliability and power quality that have significant human and economic consequences \cite{maruyama2019underutilized}.

A symptom and cause of this grid weakness is technical loss---energy lost within the electrical infrastructure itself due to Joule heating. In many LMICs, technical losses are estimated to be well over 20\% of total generation \cite{bhatti2015electric, jimenez2014sizing}\footnote{Without direct measurements of technical loss, disentangling total loss into technical and commercial losses is difficult. Still, staggering levels of total loss and clear evidence of infrastructure weakness attest to highly elevated technical loss in LMICs.}, compared to around 5\% in the United States and much of Western Europe \cite{eia, vasconcelos2008survey}. High technical losses have serious societal and environmental costs. Losses threaten the financial viability of electric utilities, as the cost of lost energy is too large to recoup through tariffs that many customers already struggle to afford. When generation is fossil fuel based, technical loss means heightened emissions per unit of electricity delivered. Weak grids, which lead to high loss, also manifest service quality issues such as frequent outages and poor power quality, which plague consumers across LMICs and severely curtail the economic and social benefits of electrification \cite{maruyama2019underutilized}. Finally, losses threaten successful climate mitigation and adaptation strategies that depend on strong grid infrastructure.

Reducing technical losses under the tight resource constraints faced by LMIC grids is challenging. The vast majority of technical loss occurs in the sprawling distribution network \cite{jimenez2014sizing}. It is impossible to upgrade this infrastructure wholesale, so targeted interventions are needed. The worst performing feeder in Accra, Ghana, for example, has around two orders of magnitude greater loss rates than the best performing feeder (as estimated by \textit{n}Line's large grid monitoring project in Ghana \cite{klugman2021watching}); this variation in performance within the distribution system underscores that targeting can maximize intervention efficiency. Effectively targeting loss-reduction activities demands knowledge of existing loss levels at high spatial and temporal resolution.

However, there is currently no practically feasible method for large-scale, high-resolution measurement or estimation of technical loss in LMIC distribution networks. Many existing methods demand comprehensive power flow sensing composed of conventional line sensorsi and smart metering to collectively capture all power flows into and out of the network \cite{meffe2009technical, anumaka2012analysis, hong2010calculating, neto2013probabilistic}. Other techniques have even higher sensing demands, requiring additional specialized monitoring to improve loss estimates, such as temperature sensors \cite{henriques2020monitoring}. Another class of methods entails system simulation, which requires detailed system models and, in lieu of direct power measurements, accurate load estimates \cite{abdulkareem2021investigating, queiroz2012energy, shenkman1990energy}. Neither extensive sensing nor simulation-ready models exist at scale for LMIC distribution networks, making both classes of approaches of limited applicability. The most practically useful prior work in LMIC contexts are engineering ``rules of thumb'', some of which have been enhanced to incorporate available measurements. For example, \cite{hong2010calculating} improves a well-established rule of thumb by using current measurements. However, rules of thumb were developed on data from distribution networks in the United States or Europe, often through regressing on the results of power flow simulations \cite{chen1994development}. Their generalizability, especially to LMIC contexts, is limited \cite{ibrahim2017system}.

In this work, we propose an alternative. We show mathematically that, under reasonable assumptions, voltage magnitude measurements alone can be used to estimate technical loss fractions across lines without power flow or load measurements. We present a mathematical correction step that mitigates deterioration of the loss estimate when sensor coverage is increasingly sparse. Finally, we discuss how leveraging sparse voltage measurements enables a practical sensing approach for technical loss, free from the comprehensive coverage requirements, privacy concerns \cite{mckenna2012smart}, sensor placement constraints, and installation challenges associated with power flow measurements. Our novel voltage-based loss estimator is termed \textit{\texttt{voss}}. We evaluate \texttt{voss} in simulation and describe how a low-cost, outlet-level voltage sensing approach---such as that deployed at scale by \textit{n}Line in Accra, Ghana---enables this method in practice.

The rest of this paper is organized as follows. \cref{sec:voss} derives and discusses the \texttt{voss} estimator. \cref{sec:sims} shares estimator results in simulation on IEEE test feeders. \cref{sec:real_world} discusses the practical considerations, strengths, and limitations of applying the estimator in the real world.

A note on notation: bold ($\mathbf{x}$) denotes complex (phasor) quantities while light ($x$) denotes scalars.

\section{\textbf{VOSS}: Voltage to Technical Loss}\label{sec:voss}
\begin{figure}[t]
\centering
\includegraphics[width=0.44\textwidth]{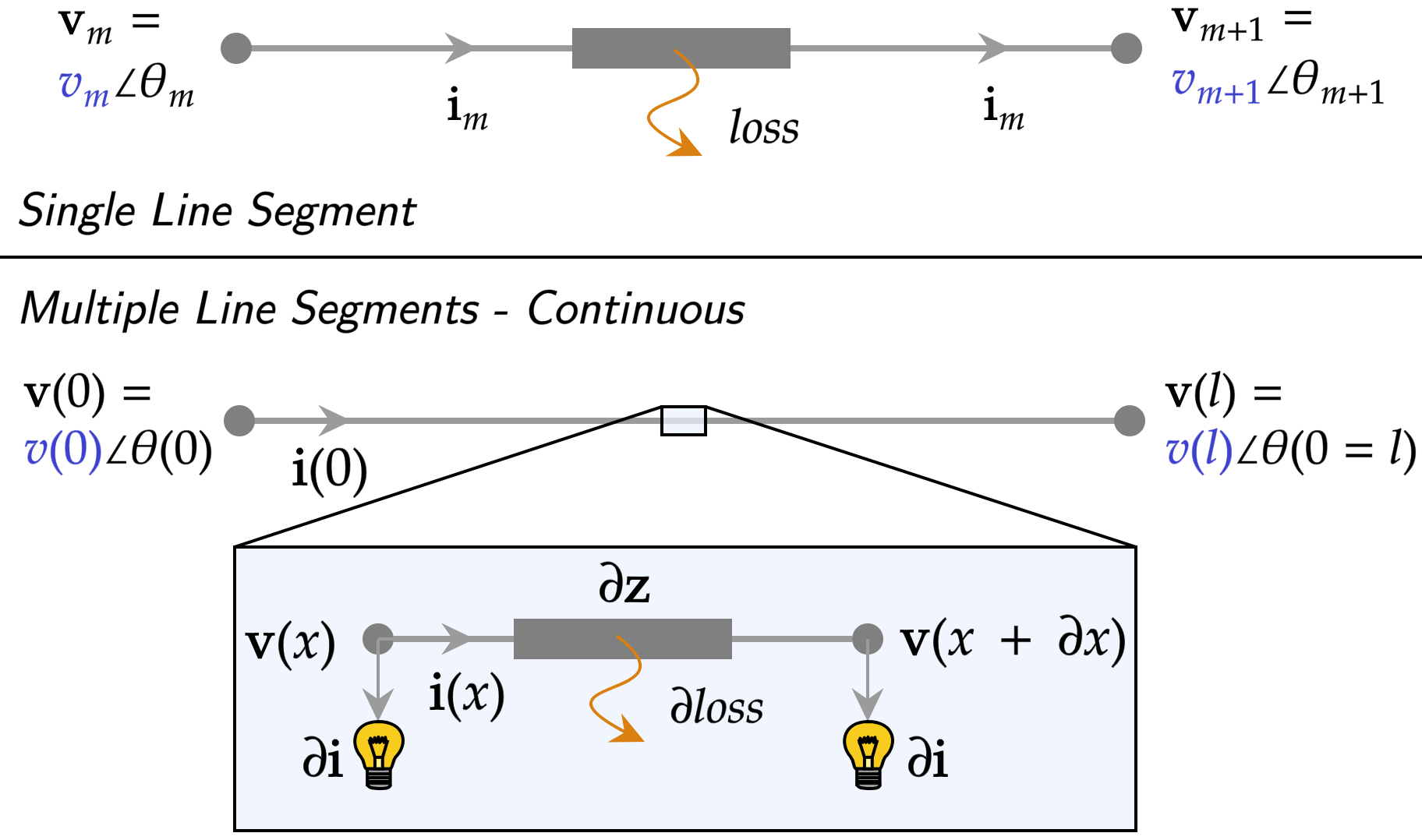}
\caption{The models for deriving the single segment (top) and mult-segment (bottom) \texttt{voss} estimator of \cref{sec:single_seg} and \cref{sec:multi_seg} respectively. Measured quantities are indicated in blue.}
\label{fig:model}
\end{figure}

The \texttt{voss} estimator estimates the fractional technical loss along a single network line using nodal voltage magnitude measurements at the two ends. The line can consist of a single segment---in which case no load couplings are present between a pair of nodal voltage measurements---or multiple segments, in which case load currents enter or exit the network between the pair of voltage measurements. We start by deriving the \texttt{voss} estimator for the simpler single segment scenario, and then proceed to the multiple segment case. Both scenarios, and corresponding notation, are visualized in \cref{fig:model}.

\subsection{Single Line Segment}\label{sec:single_seg}
We first derive an approximation of the single line segment loss, assuming no intervening load connections between the start and end point of the line. We seek an approximation that can be expressed in terms of voltage magnitudes at the start and end of the segment alone. 

\begin{theorem}
\label{thrm:single-segment}
Let the fraction of loss occurring on a single line segment $m$ with input and output power $\mathbf{s}_m$ and $\mathbf{s}_{m+1}$ respectively be denoted as: $loss_{frac}(m) \defeq \bigg|\frac{\textbf{s}_m-\textbf{s}_{m+1}}{\textbf{s}_m}\bigg|$. The line loss on segment $m$ can be approximated as:
\begin{equation}
 loss_{frac}(m) \approx \texttt{voss}(m) \defeq \frac{v_m^2-v_mv_{m+1}}{v_m^2},
 \label{eq:voss_ss}
\end{equation}
where $v_m$ and $v_{m+1}$ are the nodal voltage magnitudes at the start and end of the line respectively.
\end{theorem}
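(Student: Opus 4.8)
The plan is to reduce the fractional loss, which is defined through the unmeasured complex powers and current, to a quantity in the nodal voltage magnitudes alone, using the standard single-line circuit relations together with one physical approximation on voltage angles. First I would introduce the complex nodal voltages $\mathbf{v}_m, \mathbf{v}_{m+1}$ with magnitudes $v_m, v_{m+1}$, the series current $\mathbf{i}$ through the segment, and the line impedance $\mathbf{z}$. Because there are no intervening load connections on a single segment, the same current flows at both ends, so $\mathbf{s}_m = \mathbf{v}_m \mathbf{i}^*$ and $\mathbf{s}_{m+1} = \mathbf{v}_{m+1}\mathbf{i}^*$, while Ohm's law across the series impedance gives $\mathbf{v}_m - \mathbf{v}_{m+1} = \mathbf{z}\mathbf{i}$.

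Next I would combine these to write the complex loss in closed form,
\begin{equation}
\mathbf{s}_m - \mathbf{s}_{m+1} = (\mathbf{v}_m - \mathbf{v}_{m+1})\,\mathbf{i}^* = \mathbf{z}\,|\mathbf{i}|^2 ,
\end{equation}
and then take magnitudes and divide by $|\mathbf{s}_m| = v_m |\mathbf{i}|$. Both the impedance magnitude and the current magnitude cancel in a way that collapses the fractional loss to
\begin{equation}
loss_{frac}(m) = \frac{|\mathbf{z}|\,|\mathbf{i}|}{v_m} = \frac{|\mathbf{v}_m - \mathbf{v}_{m+1}|}{v_m},
\end{equation}
where the final equality reuses $\mathbf{v}_m - \mathbf{v}_{m+1} = \mathbf{z}\mathbf{i}$. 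At this point the numerator is the magnitude of the phasor voltage drop, which is still not directly measurable from magnitudes alone.

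The crux of the argument, and the step I expect to be the main obstacle, is replacing the phasor-difference magnitude $|\mathbf{v}_m - \mathbf{v}_{m+1}|$ by the magnitude difference $v_m - v_{m+1}$. Letting $\delta$ denote the voltage angle difference across the segment, the law of cosines gives $|\mathbf{v}_m - \mathbf{v}_{m+1}|^2 = v_m^2 + v_{m+1}^2 - 2 v_m v_{m+1}\cos\delta$, which reduces to $(v_m - v_{m+1})^2$ exactly when $\cos\delta \approx 1$. In distribution networks the voltage angle difference across a single line segment is small, so I would state this small-angle condition explicitly as the governing assumption and, if desired, bound the resulting relative error in terms of $\delta$. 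Substituting $|\mathbf{v}_m - \mathbf{v}_{m+1}| \approx v_m - v_{m+1}$ then yields
\begin{equation}
loss_{frac}(m) \approx \frac{v_m - v_{m+1}}{v_m} = \frac{v_m^2 - v_m v_{m+1}}{v_m^2} = \texttt{voss}(m),
\end{equation}
which is the claimed identity. The only remaining routine check is that $v_m \ge v_{m+1}$, so the approximated numerator is nonnegative, consistent with power flowing from node $m$ to node $m+1$.
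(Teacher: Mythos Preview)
Your proof is correct and follows essentially the same route as the paper: cancel the common current to reduce $loss_{frac}(m)$ to $|\mathbf{v}_m-\mathbf{v}_{m+1}|/v_m$, then invoke the small-angle approximation on the voltage phase difference to replace the phasor-difference magnitude by $v_m-v_{m+1}$. The detour through Ohm's law and $\mathbf{z}|\mathbf{i}|^2$ is unnecessary (the paper simply cancels $\mathbf{i}^*$ directly), and the paper phrases the approximation as $e^{j(\theta_m-\theta_{m+1})}\approx 1$ rather than via the law of cosines, but these are cosmetic differences.
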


\begin{proof}
We wish to estimate the technical loss as a fraction of power input to line segment $m$. The phasor voltages at the two ends of the line are $\mathbf{v}_m = v_m\angle \theta_m$ and $\mathbf{v}_{m+1} = v_{m+1}\angle \theta_{m+1}$, and the current flowing in and out (as there are no intervening load connections) is $\mathbf{i}_m$. Then the input power is $\textbf{s}_m = \textbf{v}_m\textbf{i}_m^*$ and the output power is $\textbf{s}_{m+1} = \textbf{v}_{m+1}\textbf{i}_m^*$. This allows us to write the fraction of input power lost along the line as:
\begin{align*}
    loss_{frac}(m) = \bigg|\frac{\textbf{s}_m-\textbf{s}_{m+1}}{\textbf{s}_m}\bigg| = \bigg|\frac{\textbf{i}^*_m(\textbf{v}_m-\textbf{v}_{m+1})}{\textbf{v}_m\textbf{i}_m^*}\bigg|\\
    = \bigg|\frac{v_m^2 - \textbf{v}_m^*\textbf{v}_{m+1}}{v_m^2}\bigg| = \bigg|\frac{v_m^2-v_mv_{m+1}e^{j(\theta_m-\theta_{m+1})}}{v_m^2}\bigg|
\end{align*}
In many practical sensing contexts, only voltage \textit{magnitudes}---$v_m$ and $v_{m+1}$---are measured. We can apply the small angle approximation to the angle difference across the line \cite{bolognani2015existence}:
\begin{align}
    e^{j(\theta_m-\theta_{m+1})} \approx 1
\end{align}
This approximation allows the loss fraction to be expressed in terms of the available magnitude measurements. This is the \texttt{voss} estimate:
\begin{align}
    loss_{frac}(m) \approx \texttt{voss}(m) = \frac{v_m^2-v_mv_{m+1}}{v_m^2}
\end{align}
\end{proof}

The approximation of line loss in \cref{thrm:single-segment} is both intuitive---the amount of loss on a line is directly proportional to the drop in voltage magnitude that occurs along the line---and practically significant: the fraction of technical loss along a line can be estimated using easy-to-measure voltage magnitudes at either end of the line, and does not require a more challenging power flow measurement.

\subsection{Multiple Line Segments}\label{sec:multi_seg}
A more realistic scenario is one in which we have sparser sensing: rather than voltage measurements available between every pair of load coupling points, measurements are available at two ends of a multi-segment line with several intervening load connections. We are interested in identifying and compensating for the error in the \texttt{voss} estimator due to use of endpoint voltages with intermediate current (and power) extractions. 

\begin{theorem}
\label{thrm:multi-segment}
Let the fraction of loss occurring on a line segment of length $l$ be denoted as $loss_{frac}(0, l)$. The line loss along this segment can be approximated as:
\begin{equation}
 loss_{frac}(0, l) \approx \texttt{voss}(0, l) \defeq  \hat{c} \cdot \frac{v(0)^2-v(0)v(l)}{v(0)^2}
 \label{eq:voss_corrected}
\end{equation}
where $v(0)$ and $v(l)$ are the nodal voltage magnitudes at the start and end of the line respectively, and the correction factor $\hat{c}$ is defined as:
\begin{align}
    \hat{c} \triangleq 1-\bigg(\frac{\rho_v-\rho_s}{\rho_v+\rho_s}\bigg)\bigg(\frac{\rho_v+2\rho_s}{3\rho_v}\bigg),
    \label{eq:final_factor}
\end{align}
with parameters $\rho_s$, the ratio of the real power flowing into and out of the line, and $\rho_p$, the ratio of voltage magnitudes at the line ends.
\end{theorem}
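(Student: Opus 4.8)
The plan is to pass from the discrete chain of segments to a continuous line model and then form the ratio between the true fractional loss and the endpoint estimate $\frac{v(0)^2-v(0)v(l)}{v(0)^2} = 1-\rho_v$; the correction factor $\hat{c}$ is exactly that ratio. First I would parametrize the line by $x\in[0,l]$, writing $i(x)$, $v(x)$, and $s(x)$ for the current magnitude, voltage magnitude, and power flowing past $x$. The physically meaningful loss is the resistive dissipation $P_{\text{loss}} = \int_0^l i(x)^2 r\,dx$, so the quantity $loss_{frac}(0,l)$ is $P_{\text{loss}}$ divided by the input power $s(0)\approx v(0)\,i(0)$ (near-unity power factor, tying the normalization to the apparent power used in \cref{thrm:single-segment}). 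To evaluate the integral in closed form I would invoke the central modeling assumption that the intervening loads are uniformly distributed along the line, so that the current tapers linearly from $i(0)$ to $i(l)$.

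With a linear current profile the two integrals I need are elementary. The loss integral yields the quadratic form $P_{\text{loss}} = \frac{rl}{3}\big(i(0)^2 + i(0)i(l) + i(l)^2\big)$ --- this is where the factor $\tfrac{1}{3}$ and the symmetric numerator of $\hat{c}$ originate --- while the voltage drop is $v(0)-v(l) = \frac{rl}{2}\big(i(0)+i(l)\big)$, reusing the small-angle reasoning of \cref{thrm:single-segment} so that only the resistive drop along the voltage direction survives. Dividing the loss fraction by the endpoint estimate and using $s(0)\approx v(0)\,i(0)$, the resistance $r$, the length $l$, and the leading constants all cancel, leaving $\hat{c}$ as a function of the single current ratio $i(l)/i(0)$ alone, namely $\hat{c} = \frac{2\left(1+\rho+\rho^2\right)}{3(1+\rho)}$ with $\rho = i(l)/i(0)$.

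The final step is to re-express that current ratio in measurable terms. Writing $i=s/v$ at each end gives $i(l)/i(0) = \rho_s/\rho_v$, where $\rho_s$ and $\rho_v$ are the power and voltage ratios across the line. Substituting $\rho=\rho_s/\rho_v$ and placing the rational expression over the common denominator $3\rho_v(\rho_v+\rho_s)$ reproduces \cref{eq:final_factor}; the ``$1-(\cdots)(\cdots)$'' form then makes transparent the sanity check that $\hat{c}=1$ precisely when $\rho_v=\rho_s$, i.e.\ when the current is constant and no load is tapped, recovering \cref{thrm:single-segment} and confirming that $\hat{c}<1$ corrects the endpoint estimator's tendency to overstate loss.

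I expect the main obstacle to be the uniform-load (linear-current) assumption: it is what pins down the specific $\tfrac{1}{3}$ weighting and hence the exact functional form of $\hat{c}$, yet real feeders carry lumpy, unevenly placed loads, so the honest content of the theorem is an idealized profile rather than a worst-case bound. A secondary difficulty, inherited from \cref{thrm:single-segment}, is justifying that the voltage-magnitude drop tracks the resistive loss: this collapses the complex drop $\int_0^l \mathbf{i}(x)\,(r+jx_L)\,dx$ onto its component along the nominal voltage, which is clean only under the small-angle and favorable-power-factor conditions, so I would state those assumptions explicitly before committing to $v(0)-v(l) = \frac{rl}{2}\big(i(0)+i(l)\big)$.
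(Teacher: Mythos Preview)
Your proposal is correct and follows essentially the same route as the paper: a continuous line with uniform current leakage (hence a linear current profile), integration of $i(x)^2$ and $i(x)$ to obtain the loss and the voltage drop, forming their ratio to extract the correction factor, and then substituting $i(l)/i(0)=\rho_s/\rho_v$. The only cosmetic difference is that the paper parametrizes by the \emph{fraction of current lost} $\rho=1-i(l)/i(0)$ and keeps the complex impedance $\boldsymbol{\zeta}$ through the integration before approximating, whereas you work directly with the end-current ratio and resistance; the resulting expressions $\frac{2(1+\rho+\rho^2)}{3(1+\rho)}$ and $1-\rho\frac{3-2\rho}{6-3\rho}$ (with the two $\rho$'s related by $\rho\leftrightarrow 1-\rho$) are algebraically identical and both reduce to \cref{eq:final_factor}.
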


\begin{proof} Both for simplicity and because we assume precise information on the spacing and size of loads is unavailable, we model this situation as highly symmetric. In the discrete case, this would mean identical loads separated by identical impedances, while in the continuous case we model a uniform impedance line with uniform current leakage along it's length. Bear in mind that our purpose is not to exactly model arbitrary network configurations, but to derive a workable estimator that demands minimal measurement coverage and network information. 

Our notation changes slightly for the continuous setup. The variable $x$ denotes distance along the line, which has a total length of $l$. We assume constant impedance and current leakage per unit of line length: 
\begin{align}
    \frac{\partial \mathbf{i}}{\partial x} = \iota,\ \frac{\partial \mathbf{z}}{\partial x} = \boldsymbol{\zeta}
\end{align}
The current flowing through the line as a function of distance along the line is:
\begin{align}
    \mathbf{i}(x) = i(0) - \int_0^x \iota \partial x = i(0) - \iota x
\end{align}
where we have chosen $i(0)$ to be real (such that it defines the reference angle), and simplified $\frac{\partial \mathbf{i}}{\partial x}$ to be real.
The voltage drop along the length of the line is:
\begin{align*}
    \mathbf{v}(0) - \mathbf{v}(l) = \int_0^l \boldsymbol{\zeta}\mathbf{i}(x) \partial x\\ = \int_0^l\boldsymbol{\zeta}(i(0)-\iota x)\partial x
    = \boldsymbol{\zeta}(i(0)l-\frac{\iota}{2}l^2)
\end{align*}
Solving for $\boldsymbol{\zeta}$ gives:
\begin{align*}
    \boldsymbol{\zeta} = \frac{\mathbf{v}(0)-\mathbf{v}(l)}{i(0)l-\frac{\iota}{2}l^2}
\end{align*}
The total power lost along the multi-segment line, denoted $\boldsymbol{\lambda}_{ms}$, is: 
\begin{align*}
    \boldsymbol{\lambda}_{ms} = \int_0^l \mathbf{i}^2(x) \boldsymbol{\zeta} \partial x = \boldsymbol{\zeta}\int_0^l(i(0)^2-2i(0)\iota x+\iota^2x^2)\partial x\\
    =\boldsymbol{\zeta}\bigg(i(0)^2l-i(0)\iota l^2+\frac{\iota^2}{3}l^3\bigg)\\
    = \boldsymbol{\zeta}\bigg(i(0)^2l - \frac{\iota l^2}{2}i(0)\bigg) + \boldsymbol{\zeta}\bigg(-\frac{\iota l^2}{2}i(0)+\frac{\iota^2l^3}{3}\bigg)\\
    = (\mathbf{v}(0)-\mathbf{v}(l))i^*(0)+(\mathbf{v}(0)-\mathbf{v}(l))i^*(0)\bigg(\frac{-\frac{\iota l}{2}+\frac{\iota^2l^2}{3i(0)}}{i(0)-\frac{\iota}{2}l}\bigg)
\end{align*}
$(\mathbf{v}(0)-\mathbf{v}(l))i^*(0)$ would be the loss across this line \textit{if} it consisted of a single segment. This is also the loss estimated by the single-segment \texttt{voss} estimator of \cref{eq:voss_ss}. We denote this quantity $\boldsymbol{\lambda}_{ss}$. Note that $\iota l$ is the total current that escapes along the line. We will express it as a fraction $\rho$ of the input current: $\iota l = \rho i(0)$. Plugging in these new definitions, we obtain: 
\begin{align*}
    \boldsymbol{\lambda}_{ms} = \boldsymbol{\lambda}_{ss}\bigg(1 + \frac{-\frac{\rho i(0)}{2}+\frac{\rho^2i(0)}{3}}{i(0)-\frac{\rho i(0)}{2}}\bigg) \\ = \boldsymbol{\lambda}_{ss}\bigg(1 - \rho\frac{3-2\rho}{6-3\rho}\bigg)
\end{align*}
We've derived a correction factor in terms of the fraction of current lost on the line, to obtain the true multi-segment loss $\boldsymbol{\lambda}_{ms}$ from the single segment equation, assuming the symmetric line model.
\begin{align}
    c \triangleq 1 - \rho\frac{3-2\rho}{6-3\rho} \label{eq:cfactor}
\end{align}
Note that the correction factor $c \leq 1$. We can apply this correction factor as is, using an engineering estimate of $\rho$ to update our single segment \texttt{voss} estimate for the multi-segment case. 

However, current loss is not a physically intuitive quantity and is difficult to estimate. We can re-express $c$ in more practical terms based on the measured voltage ratios, and an estimate of the \textit{power loss} fraction along the line. First note that we can rewrite $\rho$ as follows:
\begin{align*}
    \rho = \frac{i(0)-i(l)}{i(0)} = 1 - \bigg(\frac{\mathbf{s}(l)}{\mathbf{s}(0)}\bigg)^*\bigg(\frac{\mathbf{v}(0)}{\mathbf{v}(l)}\bigg)^*
\end{align*}
where $\mathbf{s}(l)$ and $\mathbf{s}(0)$ are the current flows into and out of the line respectively. Then, $\rho$ can be approximated as:
\begin{align*}
    \rho \approx 1 - \frac{p(l)}{p(0)}\frac{v(0)}{v(l)} \triangleq 1 - \frac{\rho_s}{\rho_v}
\end{align*}
where $\rho_s$ is ratio of between real power flowing in and out of the line, while $\rho_v$ is the ratio of voltage magnitudes at the line ends. This definition of $\rho$ is practically usable: $\rho_v$ can be computed from the available voltage magnitude measurements, while $\rho_s$ can be reasonably chosen through an engineering estimate. For example, if the line of interest starts at the feeder head and ends $\frac{1}{3}$ of the way down, we can reasonably estimate $\rho_s = \frac{2}{3}$, assuming load is uniformly distributed along the feeder length. Then, the refined \texttt{voss} estimator for the multi-segment case is: 
\begin{align}
    \hat{c} \triangleq 1-\bigg(\frac{\rho_v-\rho_s}{\rho_v+\rho_s}\bigg)\bigg(\frac{\rho_v+2\rho_s}{3\rho_v}\bigg)\\
    \texttt{voss}(0, l) = \hat{c} \cdot \frac{v(0)^2-v(0)v(l)}{v(0)^2}
\end{align}
As we expect $0 \leq \rho_s < \rho_v \leq 1$, the correction factor in \cref{eq:final_factor} should be $\leq 1$; indeed, since the un-approximated correction $c$ in \cref{eq:cfactor} is always $\leq 1$, if $\hat{c}$ is not, it suggests our approximations are inappropriate for the given context.
\end{proof}

\section{Simulation Results}\label{sec:sims}
We apply the \texttt{voss} estimator to two radial, unbalanced three-phase distribution IEEE test networks: the 13 node and 34 node test feeders \cite{kersting1991radial}. Measurements of nodal voltages and line flows are generated by simulating the networks in OpenDss \cite{montenegro2012real}.

\begin{table}[h!]
\centering
\begin{tabular}{llllll}
    \toprule
    Line & {} & Single segment & $\hat{c}$ & Multi-segment & True Loss\\
    \midrule
    \multirow{3}{*}{$800 - 814$} & A & 0.28 & 1.00 & 0.28 & 0.28\\
    {} & B & 0.20 & 0.96 & 0.19 & 0.19\\
    {} & C & 0.19 & 0.98 & 0.18 & 0.18\\
    \midrule
    $816 - 822$ & A & 0.010 & 0.75 & 0.073 & 0.078\\
    \midrule
    \multirow{3}{*}{$828 - 854$} & A & 0.059 & 0.99 & 0.058 & 0.058\\
    {} & B & 0.061 & 0.99 & 0.060 & 0.060\\
    {} & C & 0.056 & 0.98 & 0.055 & 0.055\\
    \bottomrule
\end{tabular}
\caption{Comparison of single and multi-segment voss estimate to true loss percentage on the IEEE 34 node test system.}\label{tab:correction}
\end{table}

\begin{figure*}[ht!]

\subfloat[Simulation results on IEEE 13 Node Test Feeder]{%
\includegraphics[width=0.95\textwidth]{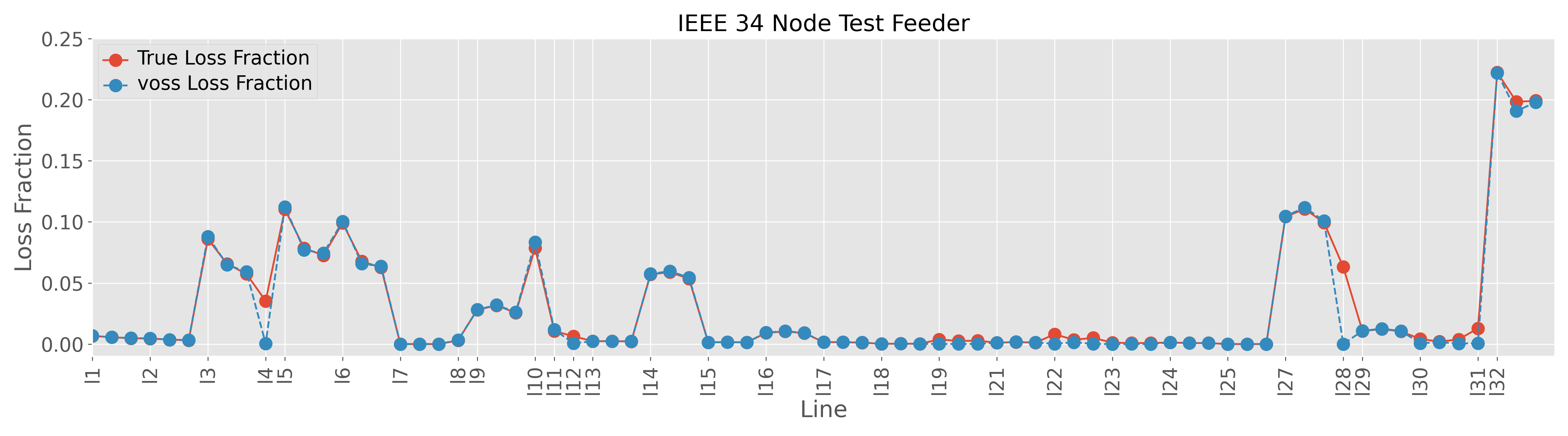}
}

\subfloat[Simulation results on IEEE 34 Node Test Feeder]{%
\includegraphics[width=0.95\textwidth]{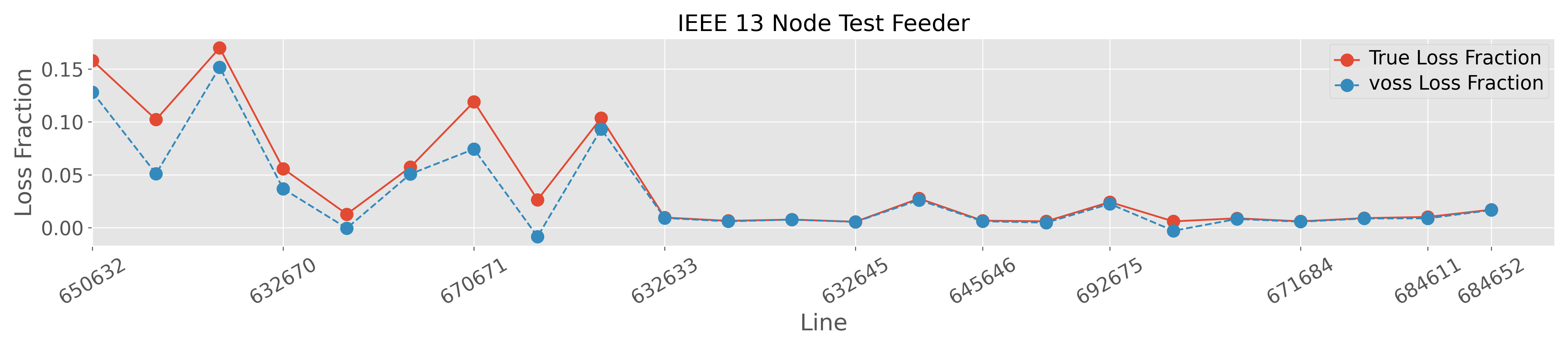}
}

\caption{Comparison of single-segment \texttt{voss} estimates in \cref{eq:voss_ss} to true loss fraction for individual line segments on two IEEE distribution test networks. Results are shown for each phase of each line. Lines \texttt{20} and \texttt{26} for the 34 node network and line \texttt{671680} for the 13 node network are excluded from the results as they have $\sim 0$ input power, rendering the loss fraction noise-sensitive and meaningless, especially for the purposes of estimator evaluation.}\label{fig:13and34}

\end{figure*}

On both feeders, we apply the single segment \texttt{voss} estimator from \cref{eq:voss_ss} to each line, comparing the result to the true line loss fraction. (Since these are three phase networks, each ``line'' consists of a line-phase pair, so we have a \texttt{voss} estimate for phase A of line 1, phase B of line 1, etc.) The results are plotted in \cref{fig:13and34} for the IEEE 34 node and 13 node test networks.

On the larger 34 node network, which has a few longer multi-segment lines, we also apply the multi-segment \texttt{voss} estimator of \cref{eq:voss_corrected}. Results are documented in \cref{tab:correction}, where we compare the uncorrected and corrected loss estimates (i.e. \cref{eq:voss_corrected} with and without $\hat{c}$).

These results compellingly demonstrate the efficacy of the \texttt{voss} estimator in simulation. In the next section, we discuss application of \texttt{voss} in operational grids.

\section{Real World Application}\label{sec:real_world}
The \texttt{voss} estimator has several practical advantages over existing technical loss estimation approaches. Many traditional approaches to evaluating loss in an electrical network are either purely simulation based or generic rules-of-thumb and thus estranged from any network measurements. Such approaches require detailed system information or are often too generalized to be accurate information on a given system context. At the other end of the algorithmic spectrum, we have methods that demand comprehensive power flow measurements throughout the system. Such measurements are expensive and logistically challenging to obtain. Meters installed on utility infrastructure, such as PMUs, are individually expensive and require specialized electricians as well as de-energization during installation \cite{overbye2010smart}. Smart meters, which measure endpoint consumption and could be used for loss estimation once coverage is comprehensive, have proven difficult to deploy at scale in LMICs \cite{hughes2020evaluation}. Even in industrialized nations with extensive smart metering deployments, using smart meters for grid insights has proved challenging due to privacy concerns and data access hurdles \cite{taft2012sensing}. Neither PMUs or smart meters will likely be practically viable as a sensing strategy to enable situational awareness (including loss estimation) in LMIC medium- and low-voltage networks.

\begin{figure*}[h!]
    \centering
    \includegraphics[width=\textwidth]{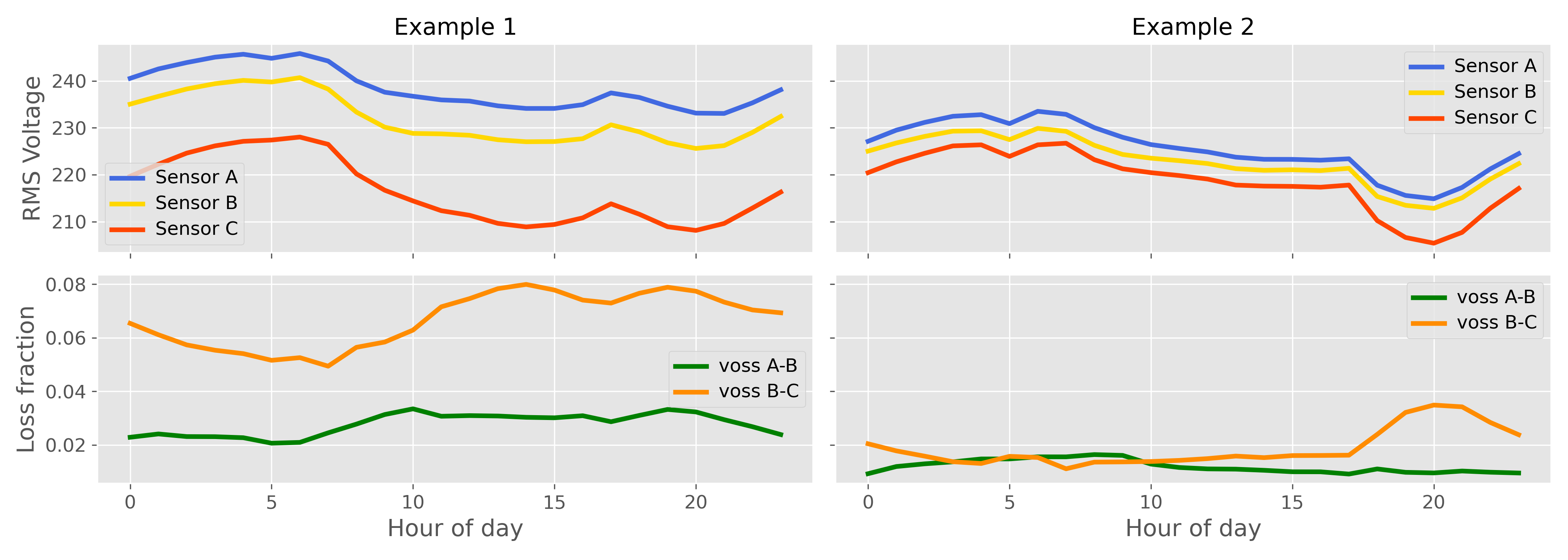}
    \caption{The \texttt{voss} estimator applied to real-world voltage magnitude measurements made \textit{n}Line's GridWatch sensors in Accra, Ghana. Each example is from a different part of Accra and uses measurements from three sensors situated linearly down a network, as demonstrated by the clear voltage drop between each sensor's voltage reading. For each example, we plot the daily voltage curve recorded by the sensor (top) and a corresponding fractional loss estimate along the line connecting sensors (bottom).}
    \label{fig:ghanaLoss}
\end{figure*}

Voltage measurements are considered less intrusive than consumption data, and therefore privacy constraints on their use for system monitoring are less onerous \cite{mckenna2012smart}. Voltage is a more global quantity than current or power. To obtain a flow measurement on a line, we must measure exactly the line of interest. To obtain a voltage measurement at a node, we can often get away with measuring voltage in the vicinity. This enables creative, affordable, and pragmatic sensing strategies for voltage-based loss estimation. One strategy developed by \textit{n}Line is the GridWatch system consisting of fleets of plug-point sensors that measure voltage at customer wall-outlets. GridWatch sensors are small, inexpensive, and extraordinarily simple to deploy: \textit{n}Line has deployed and maintained 1400+ such sensors across the city of Accra in Ghana over the past five years \cite{klugman2021watching}. Equipped with a universal SIM and battery backup, the sensors reliably report voltage magnitudes, frequency, and power state at two minute intervals.

While the voltage measurements reported by GridWatch sensors are not directly measured on the utility line, they can be used as an approximation of line voltage in order to compute a \texttt{voss} estimate. Given the significant voltage drops we have observed along network lines in LMIC contexts, this approximate but pragmatic approach to identifying the most lossy lines is likely adequate for a variety of targeting applications. The \texttt{voss} estimator can also be easily and intuitively refined when additional data is available. For example, when the input power to a feeder is known, the \texttt{voss} estimate of loss fraction can be converted into an estimate of power loss. 
In Fig. \ref{fig:ghanaLoss}, we provide an initial demonstration of how the \texttt{voss} estimator can be applied to real-world data. We use sample voltage magnitude measurements from our Ghana fleet. The selected sensors in Examples 1 and 2 are connected to low-voltage distribution networks that in turn are connected to medium-voltage lines in a linear pattern. Consequently, we can observe a voltage drop between sensors that likely corresponds to voltage drop along the medium voltage network. We can apply the \texttt{voss} estimator to these measurements to estimate loss fractions along the medium-voltage lines lying above the sensors; note that we can neglect the transformer turns ratio between the medium and low voltage networks by assuming it is constant across transformers and therefore is cancelled out in the estimator equation.

We do not have ground truth information about losses in Accra; however, the \texttt{voss} estimator produces reasonable numbers of medium-voltage line loss that are relatively high by the standards of industrialized countries, but match high estimates of total system loss in LMICs.

\section{Conclusion}
In this paper, we describe a novel approach to estimate and localize grid technical losses without power flow measurements, using only sparsely measured voltage magnitudes at the starts and ends of lines. That such a method can be rigorously derived is perhaps surprising and deeply empowering, enabling the development of practical, data-driven approaches to loss estimation with minimal, accessible sensing and modeling requirements. The novel \texttt{voss} estimator can provide essential insights on technical losses to electric grids in greatest need. These insights can inform effective, efficient system upgrades to prepare grids to support economic development and clean energy transitions.



%

\bibliographystyle{IEEEtran}
\bibliography{references}

\end{document}